\documentclass[journal]{IEEEtran}

\usepackage{array}
\usepackage{multirow}
\usepackage{amsmath,amssymb,amsthm}
\usepackage{xcolor}
\usepackage{graphicx}
\usepackage{cite}
\usepackage{algorithm}
\usepackage{algpseudocode}
\usepackage{hyperref}
\usepackage{stfloats}
\usepackage{microtype}
\usepackage{tikz}
\usetikzlibrary{patterns.meta,positioning,arrows.meta,calc,fit}
\usepackage{pgfplots}
\pgfplotsset{compat=1.18}
\hypersetup{hidelinks}
\usepackage[caption=false,font=footnotesize]{subfig}

\newtheorem{remark}{Remark}
\newtheorem{proposition}{Proposition}
\newtheorem{approximation}{Approximation}
\newtheorem{problem}{Problem}

\newcommand{\ct}{\mathbf{c}^{\top}}
\newcommand{\Prb}{\mathbb{P}}
\newcommand{\E}{\mathbb{E}}

\makeatletter
\g@addto@macro\normalsize{%
  \setlength{\abovedisplayskip}{3pt plus 1pt minus 1pt}%
  \setlength{\belowdisplayskip}{3pt plus 1pt minus 1pt}%
  \setlength{\abovedisplayshortskip}{1pt plus 1pt}%
  \setlength{\belowdisplayshortskip}{2pt plus 1pt minus 1pt}%
}
\makeatother

\definecolor{tealcolor}{RGB}{13,140,153}
\definecolor{ambercolor}{RGB}{217,137,20}
\definecolor{greencolor}{RGB}{8,135,93}
\definecolor{navycolor}{RGB}{27,58,92}
\definecolor{slatecolor}{RGB}{61,81,102}
\definecolor{curvecol}{HTML}{1B3A5C}
\definecolor{xcol}{HTML}{2A9D8F}
\definecolor{disruptcol}{HTML}{E9A820}
\definecolor{recovcol}{HTML}{5FBCD3}

\begin{document}
\bstctlcite{IEEEexample:BSTcontrol}

\title{Predictive Triggering for Outage-Resilient Threshold Decisions over Short-Packet Links}

\author{Nho-Duc~Tran,~\IEEEmembership{Student Member,~IEEE,}
        Aamir~Mahmood,~\IEEEmembership{Senior Member,~IEEE,}
        and~Mikael~Gidlund,~\IEEEmembership{Fellow,~IEEE}
\thanks{T.N. Duc, A. Mahmood, and M. Gidlund are with the Department of Computer and Electrical Engineering,  Mid Sweden University, Homlgatan 10, 851 70 Sundsvall, Sweden (e-mail: \{nhoduc.tran, aamir.mahmood, mikael.gidlund\}@miun.se).}
}

\maketitle

\begin{abstract}
Remote threshold decisions require more than accurate state estimates: the posterior must support reliable alarm/no-alarm decisions and, when possible, anticipate early critical decisions. We study this problem over short-packet wireless links with outage risk. We derive false-positive/false-negative feasibility conditions that define a decision-feasible region of the estimation and yield a predictive decision-update trigger. To protect predictive updates from outages, we add AoI-controlled resilience updates that both detect disruptions and maintain freshness. A two-state Markov surrogate of the thresholded process, matched to its one-step switching statistics, enables tractable long-term reliability-energy analysis. Then, we jointly optimized transmit power and AoI-controlled resilience update probabilities. Simulations show earlier, reliable decisions at competitive energy with baselines.
\end{abstract}

\begin{IEEEkeywords}
Remote state estimation, predictive scheduling, AoI, finite blocklength regime.
\end{IEEEkeywords}

\IEEEpeerreviewmaketitle

\section{Introduction}

Many remote monitoring systems (e.g., safety monitoring, anomaly detection) make critical binary decisions by comparing a state-dependent quantity with a threshold. In such systems, the remote decision agent does not observe the state directly, but relies on sporadic sensor updates over a wireless link, as shown in Fig.~\ref{fig:main_idea}. This makes communication decision-oriented rather than purely estimation-oriented: a low-MSE estimate may still be unsafe for thresholding if the posterior mass lies around the threshold, while a less accurate estimate far from the threshold can already support a reliable alarm/no-alarm action (c.f., Fig.~\ref{fig:example-region}). For safety-critical monitoring, the useful update is often the one delivered before the threshold crossing; otherwise, it may have limited value~\cite{data-lost}. This motivates predictive triggering $\mathcal{P}$, where the sensor transmits when its local posterior certifies an imminent and reliable decision change, rather than waiting only for a large innovation or an already observed transition.

However, predictive triggering is fragile when communication is unreliable. Short-packet wireless links have non-negligible packet error rates (PER), and practical deployments may also experience longer outages due to energy depletion, hardware faults, deep shadowing, or interference. A correctly predicted transition update may then be blocked until it is no longer useful. 
Thus, sending resilient updates $\mathcal{R}$ is essential to maintain freshness, enabling the system to detect and recover from outages before the next predictive transmission. Related work has co-designed sampling and decisions through Age of Information (AoI)~\cite{aoi-aware-mdp}, threshold protocols for distributed classification~\cite{multi-label}, remote estimation over collision channels~\cite{threshold-transmit}, and ideal-link transmissions with innovation magnitude~\cite{confident-level}. These studies improve estimation accuracy, freshness, or event reporting, but do not directly couple explicit false-positive/false-negative rate requirements with early delivery over short-packet links under persistent random outages. This gap motivates a transmission policy that is both decision-aware, by checking whether the posterior is feasible for a reliable binary action, and resilience-aware, by maintaining enough freshness to detect and recover from outages.


In this letter, we focus on threshold decisions rather than full-state reconstruction. Regardless of how the remote agent processes received packets, an update is useful only if the posterior can safely support the required binary action and if this action reaches the agent before the state crosses the threshold. Our contributions are: we derive posterior-based false-positive and false-negative feasibility conditions that define when a threshold decision is reliable; we model persistent random outages and introduce AoI-controlled resilience updates for timely disruption detection and freshness maintenance; we construct a two-state Markov surrogate of the thresholded Gaussian process for tractable long-term reliability and energy analysis; and we jointly optimize the transmit power and state-dependent AoI thresholds under lead-time reliability constraints, validating the resulting policy against standard baselines.

\definecolor{alarmred}{RGB}{185,55,45}
\definecolor{safeblue}{RGB}{45,95,170}
\definecolor{resgreen}{RGB}{45,135,80}
\definecolor{aoiorange}{RGB}{220,130,30}
\definecolor{alarmcol}{HTML}{D62828}
\definecolor{confusecol}{HTML}{F77F00}
\definecolor{safecol}{HTML}{2A9D8F}
\definecolor{deltacol}{HTML}{333333}

\begin{figure}[tb]
\centering

\subfloat[The smart sensor checks decision feasibility, sending predictive packets $\mathcal P$ before a certified threshold crossing, and frequent resilience packets $\mathcal R$ to maintain freshness for outage detection.]{%
    \makebox[\columnwidth][c]{
        \resizebox{0.7\columnwidth}{!}{%
        \begin{tikzpicture}[
            font=\scriptsize,
            >=Latex,
            sys/.style={
                draw,
                align=center,
                minimum height=7mm,
                text width=1.7cm,
                inner sep=2pt,
                fill=white
            },
            panel/.style={
                draw,
                rounded corners=4pt,
                align=center,
                minimum height=30mm,  
                text width=2.4cm,       
                inner sep=3pt,
                fill=black!2
            },
            arr/.style={-{Latex[length=1.8mm]}, thick},
            softarr/.style={-{Latex[length=1.6mm]}, thin},
            smallarr/.style={-{Latex[length=1.2mm]}}
        ]
        
        \node[panel] (pred) at (0,0) {};
        \node[panel, left=0.4cm of pred] (feas) {};
        \node[panel, right=0.4cm of pred] (res)  {};
        
        \draw[arr] (feas) -- (pred);
        \draw[arr] (pred) -- (res);
        
        \node[draw, dashed, thick, black!50, rounded corners=4pt, inner sep=6pt, fit=(feas) (pred) (res)] (logicbox) {};
        
        \coordinate (topcenter) at ([yshift=2.4cm]pred.center);
        \node[sys, left=0.4cm of topcenter] (sensor) {Smart Sensor};
        \node[sys, left=0.6cm of sensor] (proc) {Process};
        \node[sys, right=0.4cm of topcenter] (link) {Short Packet\\Channel};
        \node[sys, right=0.6cm of link] (agent) {Remote Agent};
        
        \draw[arr] (proc) -- node[above] { } (sensor);
        \draw[arr] (sensor) -- node[above] {$\mathcal P,\mathcal R$} (link);
        \draw[arr] (link) -- node[above] { } (agent);
        
        \draw[arr, dashed, thick, black!60] (sensor.south) -- (sensor.south |- logicbox.north);
        
        \node[font=\scriptsize\bfseries, align=center, anchor=north] at ([yshift=-4pt]feas.north)
        {Decision Feasibility};
        
        \node[align=center, anchor=north] at ([yshift=-17pt]feas.north)
        {Is estimation\\decision-safe?};
        
        \begin{scope}[shift={([xshift=-1.1cm, yshift=-0.4cm]feas.center)}]
            \draw[alarmred, very thick] (0,0) -- (0.6,0);
            \draw[black!40, very thick] (0.6,0) -- (1.4,0);
            
            \draw[smallarr, safeblue, very thick] (1.4,0) -- (1.9,0) node[right=-2pt, text=black, font=\tiny] {$z_k$};
        
            \draw[thick] (0.6,-0.1) -- (0.6,0.1);
            \draw[thick] (1.4,-0.1) -- (1.4,0.1);
        
            \node[below, font=\tiny] at (0.6,-0.02) {$z^-$};
            \node[below, font=\tiny] at (1.4,-0.02) {$z^+$};
        
            \node[alarmred, font=\tiny\bfseries] at (0.3,0.25) {alarm};
            \node[black!70, font=\tiny\bfseries] at (1.0,0.25) {hold};
            \node[safeblue, font=\tiny\bfseries] at (1.7,0.25) {no alarm};
        \end{scope}

        \node[font=\scriptsize\bfseries, align=center, anchor=north] at ([yshift=-4pt]pred.north)
        {Predictive Update $\mathcal P$};
        
        \node[align=center, anchor=north] at ([yshift=-17pt]pred.north)
        {Send before\\ crossing threshold $\Delta$};
        
        \begin{scope}[shift={([xshift=-0.9cm, yshift=-0.70cm]pred.center)}]
            \draw[smallarr] (0,0) -- (1.8,0); 
            \draw[dashed, black!80] (0.05,0.48) -- (1.6,0.48) node[right=-1pt, text=black] {$\Delta$};
        
            \draw[alarmred, very thick] plot[smooth] coordinates {(0.10,0.14) (0.45,0.22) (0.78,0.36) (1.02,0.48) (1.32,0.70) (1.60,0.86)};
        
            \draw[dashed, black!60] (0.55,0) -- (0.55,0.55);
        
            \fill[alarmred] (1.02,0.48) circle (1.5pt);
        
            \draw[smallarr, thick, black] (0.55,-0.38) -- (0.55,-0.08);
            \node[black, font=\tiny\bfseries] at (0.55,-0.48) {send $\mathcal{P}$};
        \end{scope}
        
        \node[font=\scriptsize\bfseries, align=center, anchor=north] at ([yshift=-4pt]res.north)
        {Resilience Update $\mathcal R$};
        
        \node[align=center, anchor=north] at ([yshift=-17pt]res.north)
        {AoI-controlled\\disruption detection};
        
        \begin{scope}[shift={([xshift=-0.9cm, yshift=-0.65cm]res.center)}]
            \draw[smallarr] (0,0) -- (1.8,0);
            \draw[dashed, aoiorange, thick] (0,0.5) -- (1.5,0.5) node[right=-1pt, text=black] {$\theta_s$};
        
            \draw[resgreen, very thick] (0,0.1) -- (0.25,0.35) -- (0.25,0.1) -- (0.5,0.35) -- (0.5,0.1) -- (0.9,0.5) -- (1.1,0.7);
        
            \draw[smallarr, thick, black] (0.25,-0.38) -- (0.25,-0.08);
            \draw[smallarr, thick, black] (0.5,-0.38) -- (0.5,-0.08);
            \node[black, font=\tiny\bfseries] at (0.375,-0.48) {send $\mathcal{R}$};
        
            \draw[dashed, black!60] (0.9,0) -- (0.9,0.5);
            \fill[alarmred] (0.9,0.5) circle (1.5pt);
        \end{scope}
        
        \end{tikzpicture}%
        }
    } 
    \label{fig:main_idea}
}\\[-0.1pt]
\subfloat[The estimator declares alarm when $z_k\le z^-$, no alarm when $z_k\ge z^+$, and holds its previous decision in $z^-<z_k<z^+$.]{%
    \makebox[\columnwidth][c]{
        \definecolor{alarmcol}{HTML}{D62828}
        \definecolor{confusecol}{HTML}{F77F00}
        \definecolor{safecol}{HTML}{2A9D8F}
        \definecolor{deltacol}{HTML}{333333}
        \resizebox{0.82\columnwidth}{!}{%
        \begin{tikzpicture}
        \pgfmathsetmacro{\pW}{3.5}
        \pgfmathsetmacro{\pH}{2.4}
        \pgfmathsetmacro{\gap}{0.2}
        \begin{scope}[xshift=0cm]
          \begin{axis}[at={(0cm,0cm)}, anchor=south west,width=\pW cm, height=\pH cm,xmin=0, xmax=4, ymin=0, ymax=1.45,axis lines=none, ticks=none, clip=false,scale only axis]
            \addplot[domain=0:4, samples=100, fill=alarmcol, fill opacity=0.25, draw=none] {0.65 * exp(-((x-2.7)^2)/(2*0.55^2))} \closedcycle;
            \addplot[domain=0:4, samples=100, very thick, color=alarmcol] {0.65 * exp(-((x-2.7)^2)/(2*0.55^2))};
            \draw[thick, gray, ->, >=stealth] (axis cs:0,0) -- (axis cs:4,0);
            \draw[deltacol, thick, dashed] (axis cs:1.0,0) -- (axis cs:1.0,1.2) node[above, font=\small\bfseries, text=black] {$\Delta$};
            \draw[alarmcol, thick, dotted] (axis cs:2.7,0) -- (axis cs:2.7,1.2) node[above, font=\small\bfseries, text=black] {$\hat{s}_k$};
          \end{axis}
          \node[text=black, font=\small, align=center] at (\pW/2, -0.35) {Alarm \checkmark};
        \end{scope}
        \begin{scope}[xshift=\pW cm + \gap cm]
          \begin{axis}[at={(0cm,0cm)}, anchor=south west,width=\pW cm, height=\pH cm,xmin=0, xmax=4, ymin=0, ymax=1.45,axis lines=none, ticks=none, clip=false,scale only axis]
            \addplot[domain=0:4, samples=100, fill=confusecol, fill opacity=0.25, draw=none] {1.15 * exp(-((x-2.0)^2)/(2*0.28^2))} \closedcycle;
            \addplot[domain=0:4, samples=100, very thick, color=confusecol] {1.15 * exp(-((x-2.0)^2)/(2*0.28^2))};
            \draw[thick, gray, ->, >=stealth] (axis cs:0,0) -- (axis cs:4,0);
            \draw[deltacol, thick, dashed] (axis cs:2.0,0) -- (axis cs:2.0,1.2) node[above, font=\small\bfseries, text=black] {$\Delta \approx \hat{s}_k$};
          \end{axis}
          \node[text=black, font=\small, align=center] at (\pW/2, -0.35) {Confusion zone};
        \end{scope}
        \begin{scope}[xshift=2*\pW cm + 2*\gap cm]
          \begin{axis}[at={(0cm,0cm)}, anchor=south west,width=\pW cm, height=\pH cm,xmin=0, xmax=4, ymin=0, ymax=1.45,axis lines=none, ticks=none, clip=false,scale only axis]
            \addplot[domain=0:4, samples=100, fill=safecol, fill opacity=0.25, draw=none] {0.65 * exp(-((x-1.3)^2)/(2*0.55^2))} \closedcycle;
            \addplot[domain=0:4, samples=100, very thick, color=safecol] {0.65 * exp(-((x-1.3)^2)/(2*0.55^2))};
            \draw[thick, gray, ->, >=stealth] (axis cs:0,0) -- (axis cs:4,0);
            \draw[safecol, thick, dotted] (axis cs:1.3,0) -- (axis cs:1.3,1.2) node[above, font=\small\bfseries, text=black] {$\hat{s}_k$};
            \draw[deltacol, thick, dashed] (axis cs:3.0,0) -- (axis cs:3.0,1.2) node[above, font=\small\bfseries, text=black] {$\Delta$};
          \end{axis}
          \node[text=black, font=\small, align=center] at (\pW/2, -0.35) {No alarm \checkmark};
        \end{scope}
        \end{tikzpicture}%
        }
    } 
    \label{fig:example-region}
}
\caption{Proposed transmission strategy and decision feasibility: $z_k$ denoting the $z$-score relative to the decision threshold, $\hat{s}_k$ the estimated scalar state projection, and $\theta_s$ the state-dependent AoI threshold for outage detection.}
\label{fig:first-page}
\end{figure}


\section{System Model and Problem Formulation}
\label{sec:model}
We consider a sensor monitoring a linear dynamical system with state $\mathbf{x}_k \in \mathbb{R}^N$, while a binary variable $\xi_k \in \{0,1\}$ specifies whether the update is sent at each discrete time step $k$. It is transmitted over a short-packet wireless link 
prone to packet errors and, more critically, random outages 
rendering the link inoperable until recovery is complete. A remote estimator receives these updates and decides whether the scalar projection $s_k \triangleq \mathbf{c}^\top \mathbf{x}_k$,  for a fixed vector $\mathbf{c}$, exceeds a prescribed threshold $\Delta$, issuing an alarm or no-alarm action accordingly.

\subsection{Source and Smart Sensor}
\label{sec:source}
The process and measurements follow
\begin{align}
  \mathbf{x}_{k+1} &= \mathbf{A}\mathbf{x}_k + \mathbf{w}_k,\quad 
    \mathbf{w}_k \sim \mathcal{N}(\boldsymbol{\mu}_w,\mathbf{Q}), \label{eq:state}\\
  \mathbf{y}_k &= \mathbf{C}\mathbf{x}_k + \mathbf{v}_k,\quad 
    \mathbf{v}_k \sim \mathcal{N}(\mathbf{0},\mathbf{R}), \label{eq:measurement}
\end{align}
where $\mathbf{A}\in\mathbb{R}^{N\times N}$ and $\mathbf{C}\in\mathbb{R}^{M\times N}$. 
The noises $\mathbf{w},\, \mathbf{v}$ are uncorrelated; $(\mathbf{A},\mathbf{C})$ is observable, 
$(\mathbf{A},\mathbf{Q}^{1/2})$ is controllable, and $\rho(\mathbf{A})<1$, 
where $\rho(\cdot)$ denotes the spectral radius, ensuring process stability. 
The sensor runs the standard Kalman recursion
\begin{align}
  \hat{\mathbf{x}}^{s}_{k|k-1} &= \mathbf{A}\hat{\mathbf{x}}^{s}_{k-1|k-1},\;
  \mathbf{P}^{s}_{k|k-1} = \mathbf{A}\mathbf{P}^{s}_{k-1|k-1}\mathbf{A}^\top
    +\mathbf{Q},\label{eq:pred}\\
  \mathbf{K}_k &= \mathbf{P}^{s}_{k|k-1}\mathbf{C}^\top
    \bigl(\mathbf{C}\mathbf{P}^{s}_{k|k-1}\mathbf{C}^\top+\mathbf{R}\bigr)^{-1},
    \label{eq:gain}\\
  \hat{\mathbf{x}}^{s}_{k|k} &= \hat{\mathbf{x}}^{s}_{k|k-1}
    +\mathbf{K}_k(\mathbf{y}_k-\mathbf{C}\hat{\mathbf{x}}^{s}_{k|k-1}),
    \label{eq:update}\\
  \mathbf{P}^{s}_{k|k} &= (\mathbf{I}-\mathbf{K}_k\mathbf{C})
    \mathbf{P}^{s}_{k|k-1},\label{eq:cov}
\end{align}
where $\mathbf{K}_k$ is the Kalman gain and $\mathbf{P}^{s}_{k|k}$ is the 
posterior error covariance, which together determine the quality of the local 
state estimate $\hat{\mathbf{x}}^{s}_{k|k}$.

\subsection{Wireless Channel}
For a quasi-static Rayleigh fading channel, the complex coefficient $h_k\sim\mathcal{CN}(0,1)$ remains constant over a packet. The instantaneous received SNR is $\gamma_k=|h_k|^2\bar{\gamma}$,  where $\bar{\gamma}=p_t/\sigma^2$ is the average SNR with transmit power $p_t$  and noise variance $\sigma^2$. With blocklength $n$ and information bits $l$,  the PER is 
$\epsilon_k \approx Q\!\left[\sqrt{n/V(\gamma_k)}\!\left(C(\gamma_k)
-l/n\right)\right],$ with the Shannon capacity $C(\gamma)=\ln(1+\gamma)$ and the channel dispersion $V(\gamma)=1-(1+\gamma)^{-2}$. Letting 
$\varphi=e^{l/n}-1$, $v=e^{-\varphi/\bar{\gamma}}$, and 
$\beta=-\sqrt{n/(2\pi(e^{2l/n}-1))}$, the average PER is 
approximated as~\cite{age-of-loop}
\begin{align}
    \bar{\epsilon} \approx 1 + \!\left( \beta \bar{\gamma} 
    - \beta \bar{\gamma} e^{1/(2\beta \bar{\gamma})} 
    - \tfrac{1}{2} \right) v.
    \label{eq:average_epsilon}
\end{align}
\subsection{Decision Objective}
The binary state is $0$ if $s_k<\Delta$ and $1$ otherwise. The estimator outputs a binary decision $\pi_k\in\{0,1\}$ and must keep decision errors below prescribed levels at each decision instant: the FPR $p_{\mathrm{FP},k} \triangleq  \Prb(\pi_k\!=1\!\mid s_k\!<\!\Delta)\le\alpha_{\mathrm{FP}}$ penalizes spurious alarms,  and the FNR $p_{\mathrm{FN},k}\triangleq\Prb(\pi_k\!=\!0\mid  s_k\ge\Delta)\le\alpha_{\mathrm{FN}}$ penalizes missed alarms.  Beyond accuracy, timeliness is critical. For a true transition at time  $\mathcal{T}$, let $I\triangleq\mathcal{T}-k$ be the sensor-side prediction horizon, i.e., how many steps ahead the sensor anticipates the transition at its local time $k$, and let $L\triangleq\mathcal{T}-\mathcal{T}_u$  be the estimator-side lead time, where $\mathcal{T}_u$ is the time the update is received. Ideally, if the predictive update is delivered without delay,  $L=I\geq0$; however, packet loss or disruptions push $\mathcal{T}_u$ beyond  $\mathcal{T}$, yielding $L<0$, i.e., a late decision. We therefore impose the  lead-time reliability constraint
\begin{align}
    \Prb(L\ge 0)\ge 1-\epsilon_\ell,
    \label{ct:time-trans}
\end{align}
which requires the update to arrive before the transition with high probability 
$1-\epsilon_\ell$. State-indexed versions $I_s, L_s$ refer to sojourns 
in $s\in\{0,1\}$. Here, $L=0$ means that the update arrives exactly at the crossing. This satisfies the formal reliability constraint but leaves no reaction margin. We therefore use $\Prb(L\ge0)$ as the reliability metric and $\Prb(L>0)$ as the actionable-lead-time metric in the numerical results.


\subsection{Remote Agent and Outages}
\label{sec:agent}
Let $\zeta_k\in\{0,1\}$ denote reception success and $\delta_k\triangleq\xi_k\zeta_k$. We consider two remote-agent paradigms because practical threshold-decision systems differ in the amount of information the remote side can process. In \emph{decision adoption}, the sensor transmits $\pi^s_k$ and the agent sets $\pi_k=\pi^s_k$ if $\delta_k=1$, otherwise $\pi_k=\pi_{k-1}$. 
In \emph{filter-based decision}, the sensor sends $(\hat{\mathbf{x}}^s_{k|k},\mathbf{P}^s_{k|k})$, and the agent maintains
\begin{align}
\hat{\mathbf{x}}_k &= \delta_k\hat{\mathbf{x}}^s_{k|k}+(1-\delta_k)\mathbf{A}\hat{\mathbf{x}}_{k-1},\\
\mathbf{P}_k &= \delta_k\mathbf{P}^s_{k|k}+(1-\delta_k)(\mathbf{A}\mathbf{P}_{k-1}\mathbf{A}^\top+\mathbf{Q}).\label{eq:remote_filter}
\end{align}
So it can apply the same decision and prediction rules as the sensor.

Disruptions activate independently at each stage with probability $p_r$, 
uniformly inside the stage. If a disruption starts at $\tau^\star$, reception 
is blocked for its entire duration: $\delta_k=0$ for 
$k=\tau^\star,\ldots,\tau^\star+D_{\theta,s}+t_h-1$, where $D_{\theta,s}$ is 
the detection delay until the AoI at the estimator exceeds the state-dependent 
threshold $\theta_s$ and declares the disruption, and $t_h$ is the subsequent 
recovery time. Here $s\in\{0,1\}$ is the current binary state, allowing 
different detection sensitivities per state. The recovery time follows a 
discrete Weibull distribution, $\Prb(t_h=\tau)=e^{-(\tau/\lambda)^\kappa}-e^{-((\tau+1)/\lambda)^\kappa},$ where $\lambda>0$ controls the mean recovery duration and $\kappa>1$ enforces an increasing hazard rate.
Hence $\theta_s$ simultaneously acts as a disruption detector and a freshness design variable, jointly optimized with transmit power $p_t$.

The overall design goal is to choose $p_t, \theta_0, \theta_1$ to minimize long-run transmission energy while satisfying the lead-time constraint \eqref{ct:time-trans}. Sec.~III-V build the three required ingredients in order: sojourn statistics of the binary process, the posterior condition that triggers a predictive packet, and the outage-resilience model.

\section{Surrogate Two-State Markov Process}
\label{sec:markov}
Evaluating the long-run energy and outage statistics in Sec.~V requires sojourn and transition analysis of the binary process $\{s_k\ge\Delta\}$. Since the decision depends only on whether $s_k$ crosses $\Delta$, we approximate the thresholded process by a two-state Markov chain, reducing the continuous Gaussian dynamics to a tractable structure.
This is justified because, under Sec.~\ref{sec:model}'s assumptions,  $|\rho(\mathbf{A})|<1$ guarantees that $\mathbf{x}_k$ converges to a  stationary distribution $\mathbf{x}_\infty\sim\mathcal{N}(\bar{\mathbf{x}}, \boldsymbol{\Sigma})$, where $\bar{\mathbf{x}}=(\mathbf{I}-\mathbf{A})^{-1} \boldsymbol{\mu}_w$ and $\boldsymbol{\Sigma}$ solves the discrete-time Lyapunov  equation~\cite{anderson2005optimal}. Thus $s_k=\mathbf{c}^\top\mathbf{x}_k$  is a stationary Gaussian with mean $\bar{s}=\mathbf{c}^\top\bar{\mathbf{x}}$  and variance $\sigma_s^2=\mathbf{c}^\top\boldsymbol{\Sigma}\mathbf{c}$, and the switching probabilities are exactly computable from the bivariate normal CDF.
\begin{approximation}\label{thm:switching}
Let binary states $0:=\{s_k<\Delta\}$ and $1:=\{s_k\ge\Delta\}$. The pair $(s_k,s_{k+1})$ is jointly Gaussian with
$\rho_s=\ct\mathbf{A}\boldsymbol{\Sigma}\mathbf{c}/(\ct\boldsymbol{\Sigma}\mathbf{c})$ and $a=(\Delta-\bar s)/\sigma_s$. The one-step switching probabilities are
$q_{01}=1-\Phi_2(a,a;\rho_s)/\Phi(a)$ and
$q_{10}=(\Phi(a)-\Phi_2(a,a;\rho_s))/(1-\Phi(a))$, where $\Phi_2(\cdot,\cdot;\rho)$ is the bivariate standard normal CDF. 
\end{approximation}

These probabilities define the two-state surrogate used for energy and outage analysis in Sec.~V. We next derive when the posterior is concentrated enough to support a reliable binary action - the condition that triggers a predictive update.


\begin{remark}
The thresholded Gaussian process is generally not first-order Markov. The two-state chain is therefore used as a moment-matched surrogate: it preserves the stationary occupation probabilities and the exact one-step switching probabilities of the Gaussian process. Consequently, the mean sojourn lengths are matched through $\mathbb{E}[T_0]=1/q_{01}$ and $\mathbb{E}[T_1]=1/q_{10}$. The Markov approximation enters when the full sojourn tail is replaced by a geometric tail. This is accurate enough for the considered operating regime, where the optimization mainly depends on mean sojourns, short prediction horizons, and average blackout time; Sec.~\ref{subsec:markov_validation} quantifies the residual error.
\end{remark}


\section{Decision Feasibility and Predictive Scheduling}
\label{sec:predictive}
With sojourn statistics in place, we now identify when the posterior is concentrated enough to trigger a predictive $\mathcal{P}$ packet. A threshold decision depends on both proximity to $\Delta$ and posterior uncertainty; the sufficient scalar is $z_k\triangleq(\Delta-\hat{s}_k)/\sigma_k$.

\begin{remark}
State-dependent triggering makes the exact posterior a Gaussian mixture under 
packet drops. For tractability, we approximate the estimator posterior as 
$s_k\!\sim\!\mathcal{N}(\hat{s}_k,\sigma_k^2)$, where $(\hat{s}_k,\sigma_k^2)$ 
summarizes received information up to step $k$.
\end{remark}

\subsection{Decision-Feasibility Region}

\begin{proposition}\label{prop:decision_feasibility}
Under the Gaussian posterior approximation $s_k\sim\mathcal{N}(\hat{s}_k,\sigma_k^2)$, the FPR/FNR is satisfied if $z_k\le z^-\triangleq\Phi^{-1}(\alpha_{\mathrm{FP}})$ for $\pi_k=1$, or $z_k\ge z^+\triangleq\Phi^{-1}(1-\alpha_{\mathrm{FN}})$ for $\pi_k=0$. If $z_k\in(z^-,z^+)$, no binary decision simultaneously satisfies both bounds; this is the confusion region.
\end{proposition}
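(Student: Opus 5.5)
The plan is to reinterpret the FPR/FNR requirements as posterior error probabilities conditioned on the information available at step $k$. Under the Gaussian approximation $s_k\sim\mathcal{N}(\hat{s}_k,\sigma_k^2)$ the decision $\pi_k$ is a deterministic function of $(\hat{s}_k,\sigma_k)$. Thus the relevant error is the posterior mass of the wrong event. If $\pi_k=1$ is issued, this is $\Prb(s_k<\Delta\mid\cdot)$, which is the quantity bounded by $\alpha_{\mathrm{FP}}$. If $\pi_k=0$ is issued, it is $\Prb(s_k\ge\Delta\mid\cdot)$, bounded by $\alpha_{\mathrm{FN}}$. I will state this reading explicitly at the start of the proof, since it is what makes the statement a pointwise condition in $z_k$.

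Next I standardize. Writing $s_k=\hat{s}_k+\sigma_k U$ with $U\sim\mathcal{N}(0,1)$ gives
\begin{align}
\Prb(s_k<\Delta)=\Phi(z_k),\qquad \Prb(s_k\ge\Delta)=1-\Phi(z_k),
\end{align}
with $z_k=(\Delta-\hat{s}_k)/\sigma_k$. The alarm condition $\Phi(z_k)\le\alpha_{\mathrm{FP}}$ is equivalent to $z_k\le\Phi^{-1}(\alpha_{\mathrm{FP}})=z^-$, because $\Phi$ is strictly increasing and continuous. Likewise, the no-alarm condition $1-\Phi(z_k)\le\alpha_{\mathrm{FN}}$ is equivalent to $z_k\ge\Phi^{-1}(1-\alpha_{\mathrm{FN}})=z^+$. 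This gives the two sufficient conditions, and each is in fact necessary for its own action.

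For the confusion region, I take $z_k\in(z^-,z^+)$. Choosing $\pi_k=1$ then gives $\Phi(z_k)>\alpha_{\mathrm{FP}}$, which violates the FPR bound. Choosing $\pi_k=0$ gives $1-\Phi(z_k)>\alpha_{\mathrm{FN}}$, which violates the FNR bound. Since these are the only two binary actions, neither is admissible. The interval is nonempty exactly when $z^-<z^+$, which holds whenever $\alpha_{\mathrm{FP}}+\alpha_{\mathrm{FN}}<1$, the natural regime. I will note this assumption, and also that in that regime the two feasible sets are disjoint.

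The main obstacle is interpretive rather than computational. The definitions in Sec.~II are population-level conditional probabilities $\Prb(\pi_k=1\mid s_k<\Delta)$, whereas the criterion above concerns the posterior probability $\Prb(s_k<\Delta\mid\text{data})$. I plan to bridge this by treating the constraint as holding per decision instant under the Gaussian posterior. In that setting, controlling the posterior error mass is the operational certificate used by the estimator. I will also remark that if the error is measured as a joint probability, the posterior bound still controls it. I will flag this identification as part of the approximation in the preceding remark, not prove the population-level equivalence.
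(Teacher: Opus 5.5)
Your proposal is correct and follows the paper's proof. Both compute $\Prb(s_k<\Delta)=\Phi(z_k)$ under the Gaussian posterior, read the constraints as $\Phi(z_k)\le\alpha_{\mathrm{FP}}$ for an alarm and $1-\Phi(z_k)\le\alpha_{\mathrm{FN}}$ for no alarm, and invert the strictly increasing $\Phi$ to obtain $z_k\le z^-$ and $z_k\ge z^+$. Your extras are accurate refinements, not a different route: the sharp nonemptiness condition $\alpha_{\mathrm{FP}}+\alpha_{\mathrm{FN}}<1$ (the paper uses the stronger sufficient condition $\alpha_{\mathrm{FP}},\alpha_{\mathrm{FN}}<0.5$), and your explicit flag that equating the population-level conditional rates with posterior error mass is part of the approximation, which the paper's proof passes over silently.
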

\begin{proof}
Under $s_k\sim\mathcal{N}(\hat{s}_k,\sigma_k^2)$, we have 
$\Prb(s_k<\Delta)=\Phi(z_k)$. The FPR/FNR 
constraints then reduce to $\Phi(z_k)\le\alpha_{\mathrm{FP}}$ and 
$1-\Phi(z_k)\le\alpha_{\mathrm{FN}}$, equivalent to $z_k\le z^-$ and 
$z_k\ge z^+$. For $\alpha_{\mathrm{FP}},\alpha_{\mathrm{FN}}<0.5$, two regions are disjoint.
\end{proof}
Inside the confusion region, the estimator holds $\pi_k=\pi_{k-1}$, avoiding spurious switches near $\Delta$ at the price of possible delay. At the sensor, however, a deterministic decision is needed to drive the predictive search; we therefore introduce an auxiliary threshold $\phi\in[\Gamma_0,\Gamma_1]$ that resolves the confusion region only, with $\Gamma_1=\Delta-z^-\sigma_p$, $\Gamma_0=\Delta-z^+\sigma_p$, and $\sigma_p^2=\ct\mathbf{P}_\infty\mathbf{c}$.

\begin{proposition}\label{prop:xi_equivalence_exact_fpr_fnr}
The resulting sensor rule is equivalent to $\pi^s_{k}=\mathbf{1}\{\hat s_k^s\ge\phi\}$, with steady-state error rates $\mathrm{FPR}(\phi)=(\Phi(a)-\Phi_2(a,b(\phi);\rho))/\Phi(a)$ and $\mathrm{FNR}(\phi)=(\Phi(b(\phi))-\Phi_2(a,b(\phi);\rho))/(1-\Phi(a))$, where $b(\phi)=(\phi-\bar s)/\sigma_{\hat s}$, $\rho=\sigma_{\hat s}/\sigma_s$, and $\sigma_{\hat s}^2=\sigma_s^2-\sigma_p^2$.
\end{proposition}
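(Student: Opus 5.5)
The argument has two parts. First we reduce the three-zone sensor rule to a single threshold on $\hat s^s_k$. Then we compute its steady-state error rates from a bivariate Gaussian pair $(s_k,\hat s^s_k)$.

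\emph{Reduction to one threshold.} At steady state the sensor posterior covariance converges to $\mathbf{P}_\infty$, the solution of the filtering Riccati recursion. Convergence follows from the observability/controllability assumptions of Sec.~\ref{sec:source}, so $\sigma_k=\sigma_p$ is constant. The sensor score is then $z_k=(\Delta-\hat s^s_k)/\sigma_p$, which is strictly decreasing in $\hat s^s_k$. By Proposition~\ref{prop:decision_feasibility}:
\begin{itemize}
\item $z_k\le z^-$ is equivalent to $\hat s^s_k\ge\Delta-z^-\sigma_p=\Gamma_1$, where the sensor outputs $1$;
\item $z_k\ge z^+$ is equivalent to $\hat s^s_k\le\Gamma_0$, where it outputs $0$;
\item the confusion region is $\hat s^s_k\in(\Gamma_0,\Gamma_1)$, and there the auxiliary threshold outputs $\mathbf{1}\{\hat s^s_k\ge\phi\}$.
\end{itemize}
Since $\phi\in[\Gamma_0,\Gamma_1]$, the three pieces glue into the single rule $\pi^s_k=\mathbf{1}\{\hat s^s_k\ge\phi\}$. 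The only care needed is at the endpoints. For $\hat s^s_k\ge\Gamma_1\ge\phi$ the output is $1$, and for $\hat s^s_k\le\Gamma_0\le\phi$ it is $0$. Boundary ties have probability zero, so they do not affect the error rates.

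\emph{Joint law.} By stationarity, $s_k\sim\mathcal N(\bar s,\sigma_s^2)$. The Kalman estimate $\hat s^s_k=\mathbf c^\top\hat{\mathbf x}^s_{k|k}$ is the conditional mean, so the error $e_k=s_k-\hat s^s_k$ is orthogonal to (and, being Gaussian, independent of) $\hat s^s_k$, and has variance $\sigma_p^2$. Three consequences follow:
\begin{itemize}
\item $\mathbb E[\hat s^s_k]=\bar s$;
\item $\sigma_{\hat s}^2=\sigma_s^2-\sigma_p^2$, since $\sigma_s^2=\sigma_{\hat s}^2+\sigma_p^2$;
\item $\mathrm{Cov}(s_k,\hat s^s_k)=\sigma_{\hat s}^2$, hence the correlation is $\sigma_{\hat s}^2/(\sigma_s\sigma_{\hat s})=\sigma_{\hat s}/\sigma_s=\rho$.
\end{itemize}
I expect this orthogonality step to be the main point to justify carefully. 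It needs the steady-state sensor filter to be the exact MMSE estimator under the stationary prior, with the nonzero noise mean $\boldsymbol\mu_w$ absorbed into the means. If the filter is not exact in that sense, the formula should be stated as holding asymptotically in $k$.

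\emph{Error rates.} Standardize with $U=(s_k-\bar s)/\sigma_s$ and $V=(\hat s^s_k-\bar s)/\sigma_{\hat s}$, which are standard bivariate normal with correlation $\rho$. Then $\{s_k<\Delta\}=\{U<a\}$ and $\{\pi^s_k=1\}=\{V\ge b(\phi)\}$. This gives
\[
\mathrm{FPR}(\phi)=\frac{\Prb(U<a)-\Prb(U<a,V<b)}{\Prb(U<a)}=\frac{\Phi(a)-\Phi_2(a,b(\phi);\rho)}{\Phi(a)},
\]
and
\[
\mathrm{FNR}(\phi)=\frac{\Prb(V<b)-\Prb(U<a,V<b)}{1-\Phi(a)}=\frac{\Phi(b(\phi))-\Phi_2(a,b(\phi);\rho)}{1-\Phi(a)},
\]
using the elementary identities $\Prb(U<a,V\ge b)=\Phi(a)-\Phi_2$ and $\Prb(U\ge a,V<b)=\Phi(b)-\Phi_2$. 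This completes the claimed expressions.
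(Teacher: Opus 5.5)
Your proof is correct and follows the route the statement itself encodes (the paper gives Proposition~\ref{prop:xi_equivalence_exact_fpr_fnr} without a written proof). The condition $\phi\in[\Gamma_0,\Gamma_1]$ collapses the three-zone rule into one threshold on $\hat s^s_k$, and the steady-state orthogonal decomposition $s_k=\hat s^s_k+e_k$ yields exactly $\sigma_{\hat s}^2=\sigma_s^2-\sigma_p^2$ and $\rho=\sigma_{\hat s}/\sigma_s$, after which FPR/FNR are standard bivariate-normal identities. Your caveat about $\boldsymbol\mu_w$ is well placed: the prediction step \eqref{eq:pred} omits it, so the filter is biased and $\mathbb{E}[\hat s^s_k]$ generally differs from $\bar s$; the covariance relations still hold, but $b(\phi)$ would then have to be centered at the filter's actual mean.
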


We choose $\phi^\star\in\arg\min_{\phi\in[\Gamma_0,\Gamma_1]} \lambda_{\mathrm{FP}}\mathrm{FPR}(\phi)+\lambda_{\mathrm{FN}}\mathrm{FNR}(\phi)$ with $\lambda_{\mathrm{FP}},\lambda_{\mathrm{FN}}\ge0$.

\begin{algorithm}[t]
\caption{Predictive Transition Detection}
\label{alg:predictive}
\begin{algorithmic}[1]\scriptsize
\Require $(\hat{\mathbf{x}}^s_{k|k},\mathbf{P}^s_{k|k})$, previous decision $\pi^s_{k-1}$, horizon $\ell$, thresholds $z^-,z^+$
\Ensure $\hat{\mathcal{T}}\in\{k,\ldots,k+\ell\}\cup\{\varnothing\}$
\State $(\hat{\mathbf{x}},\mathbf{P})\gets(\hat{\mathbf{x}}^s_{k|k},\mathbf{P}^s_{k|k})$
\For{$i=0,\ldots,\ell$}
    \State Compute $z_{k+i}$ and tentative $\pi_{k+i}$ from $(\hat{\mathbf{x}},\mathbf{P})$
    \If{$z_{k+i}\notin(z^-,z^+)$ and $\pi_{k+i}\neq\pi^s_{k-1}$}
        \State \Return $k+i$
    \ElsIf{$z_{k+i}\notin(z^-,z^+)$}
        \State \Return $\varnothing$
    \EndIf
    \State $\hat{\mathbf{x}}\gets\mathbf{A}\hat{\mathbf{x}}$, $\mathbf{P}\gets\mathbf{A}\mathbf{P}\mathbf{A}^\top+\mathbf{Q}$
\EndFor
\State \Return $\varnothing$
\end{algorithmic}
\end{algorithm}
Every time slot, the sensor first checks whether the current posterior supports a reliable decision change; otherwise, it propagates its local estimate up to $\ell$ slots. If Algorithm~\ref{alg:predictive} returns $\hat{\mathcal{T}}=k$, a predictive update $\mathcal{P}$ is transmitted immediately. If $\hat{\mathcal{T}}>k$, $\mathcal{P}$ is transmitted and a pending flag prevents redundant predictive updates before the predicted transition. If $\hat{\mathcal{T}}=\varnothing$, no predictive update is generated. For a state-$s$ sojourn, $I_s=\mathcal{T}_s-k_s^\star$, where $k_s^\star$ is the first $\mathcal{P}$ issue time.

\begin{figure}[tp]
\centering
\definecolor{curvecol}{HTML}{20639B}
\definecolor{disruptcol}{HTML}{ED553B}
\definecolor{recovcol}{HTML}{3CAEA3}
\definecolor{threshcol}{HTML}{888888}
\definecolor{xcol}{HTML}{D35400}

\resizebox{0.8\linewidth}{!}{%
\begin{tikzpicture}[scale=0.7, every node/.style={scale=0.65}, >=Stealth]
    \fill[disruptcol, opacity=0.1] (6.2, 0) rectangle (9.2, 4.8);
    \draw[disruptcol, thick, dashed] (6.2, 0) -- (6.2, 4.8);
    \node[font=\normalsize\bfseries, disruptcol!90!black] at (7.7, 4.5) {Disruption};
    \fill[recovcol, opacity=0.1] (9.2, 0) rectangle (12.0, 4.8);
    \draw[recovcol, thick, dashed] (9.2, 0) -- (9.2, 4.8);
    \draw[recovcol, thick, dashed] (12.0, 0) -- (12.0, 4.8);
    \node[font=\normalsize\bfseries, recovcol!90!black] at (10.6, 4.5) {Recovery};
    \draw[thick, ->] (0, -0.3) -- (0, 5.2);
    \draw[thick, ->] (-0.3, 0) -- (15.0, 0);
    \node[left, font=\normalsize\bfseries, black] at (0, 4.2) {State 1};
    \node[left, font=\normalsize\bfseries, black] at (0, 0.5) {State 0};
    \draw[thick, dashed, threshcol] (0, 2.2) -- (14.8, 2.2);
    \node[left, font=\normalsize\bfseries, black] at (0, 2.2) {$\Delta$};
    \draw[curvecol, very thick, line width=1.8pt, smooth, tension=0.5]
        plot coordinates {
            (0.5, 3.2) (1.5, 3.5) (2.5, 3.0) (3.5, 2.8)
            (4.5, 2.6) (5.3, 2.3) (6.0, 2.2)
            (6.8, 1.5) (7.8, 0.7) (8.8, 0.5)
            (9.8, 0.7) (11.0, 1.1) (12.0, 1.6)
            (12.8, 1.95) (13.6, 2.2) (14.5, 2.5)
        };
    \foreach \px/\py in {0.5/3.2, 1.5/3.5, 2.5/3.0, 3.5/2.8, 4.5/2.6, 12.3/1.75} {
        \draw[very thick, xcol, line width=1.4pt] (\px-0.12, \py-0.12) -- (\px+0.12, \py+0.12);
        \draw[very thick, xcol, line width=1.4pt] (\px-0.12, \py+0.12) -- (\px+0.12, \py-0.12);
    }
    \fill[white] (5.3, 2.3) circle (5pt);
    \draw[curvecol, very thick, line width=1.5pt] (5.3, 2.3) circle (5pt);
    \fill[curvecol] (6.0, 2.2) circle (3.5pt);
    \fill[white] (12.8, 1.95) circle (5pt);
    \draw[curvecol, very thick, line width=1.5pt] (12.8, 1.95) circle (5pt);
    \fill[curvecol] (13.6, 2.2) circle (3.5pt);
    \draw[thick, curvecol, line width=1pt] (5.3, 5.3) -- node[above, font=\large\bfseries, text=black] {$I$} (6.2, 5.3);
    \draw[very thick, curvecol, line width=1.2pt] (5.3, 5.15) -- (5.3, 5.45);
    \draw[very thick, curvecol, line width=1.2pt] (6.2, 5.15) -- (6.2, 5.45);
    \draw[dotted, thick, curvecol!50] (5.3, 2.3) -- (5.3, 5.1);
    \draw[dotted, thick, curvecol!50] (6.2, 4.8) -- (6.2, 5.1);
    \draw[->, thick, curvecol, line width=1pt] (8.5, 5.1) to[bend left=25] node[above, font=\normalsize\bfseries, text=black] {$\mathrm{AoI} > \theta$} (9.9, 5.1);
\end{tikzpicture}%
}



\caption{Predictive-plus-resilience transmission scheme. Predictive packets $\mathcal{P}$ ($\circ$) are issued at horizon $I$ prior to certified transitions ($\bullet$). Resilience packets $\mathcal{R}$ ($\times$) bound AoI, triggering outage recovery when $\mathrm{AoI}>\theta$.}
\label{fig:disruption-scheme}
\end{figure}

\section{Resilience-Driven Update Scheduling}
\label{sec:resilience}


Predictive triggering is fragile under outages: a correct $\mathcal{P}$ packet may still be blocked. We use $p_{\rm blk}$ as a renewal-reward approximation of the long-run fraction of time in which reception is unavailable. In one $0\to1\to0$ cycle, state $s$ contributes, in expectation, $p_r\bar B_s$ blocked slots, where $\bar B_s=\bar D_{\theta,s}+\bar t_h$. Hence
$p_{\rm blk}=
\min\left\{1,\,
\frac{p_r(\bar B_0+\bar B_1)}
{\mathbb{E}[T_0]+\mathbb{E}[T_1]}
\right\}.$
This treats a useful transition packet as sampling the long-run blackout process and assumes sparse, non-overlapping disruptions. Conditional on a decision-feasible $\mathcal{P}$ packet transmission, its effective communication-failure probability is approximated by
$\epsilon_{\rm eff}(p_t)=p_{\rm blk}+(1-p_{\rm blk})\bar\epsilon(p_t).$
Thus, the constraint $\epsilon_{\rm eff}(p_t)\le\epsilon_\ell$ is a tractable design choice for the communication-induced part of the lead-time objective, not an exact reformulation of \eqref{ct:time-trans}. The end-to-end lead-time probabilities are evaluated directly in Sec.~\ref{sec:sim}.

While in state $s$ and with no pending $\mathcal{P}$, the sensor attempts a resilience update $\mathcal{R}$ with probability $p_{u,s}$ in each eligible slot. Let
$\eta_s=p_{u,s}(1-\bar\epsilon(p_t))$
be the probability that an eligible slot produces a successful resilience update. We use a fixed AoI-tail tolerance $\epsilon_r\in(0,1)$, which controls how often the estimator is allowed to exceed the outage-detection threshold:
$\Prb(A_s>\theta_s)=(1-\eta_s)^{\theta_s}\le\epsilon_r.$
For the minimum refresh rate, this bound is tight, giving
$p_{u,s}^\star
=
\frac{1-\epsilon_r^{1/\theta_s}}
{1-\bar\epsilon(p_t)},$
provided $p_{u,s}^\star\le1$. Under the same geometric AoI approximation, the average residual detection delay is
$\bar D_{\theta,s}\approx\sum_{a=0}^{\theta_s-1}(\theta_s-a)\eta_s(1-\eta_s)^a$. Let $T_s$ be the state-$s$ sojourn and $I_s$ its prediction horizon. Resilience updates are attempted only in slots $1,\ldots,T_s-I_s-1$, because the first slot carries the new state indicator and slot $T_s-I_s$ is reserved for $\mathcal{P}$. Thus
$N_s^{\rm ref}|(T_s,I_s)\sim\mathrm{Binomial}((T_s-I_s-1)^+,p_{u,s})$.
With geometric tail $\Prb(T_s\ge n)=q_{ss}^{n-1}$ and independence of $T_s$ and $I_s$,
$\E[(T_s-I_s-1)^+]
=
\E[q_{ss}^{I_s}]
\frac{q_{ss}}{1-q_{ss}}.$
Hence, the average transmission rate  is
$r=
\frac{
2
+p_{u,0}\E[q_{00}^{I_0}]q_{00}/q_{01}
+p_{u,1}\E[q_{11}^{I_1}]q_{11}/q_{10}
}{
q_{01}^{-1}+q_{10}^{-1}
}.$
The moments $\E[q_{ss}^{I_s}]$ are estimated empirically via simulation.

The preceding analysis turns the proposed scheduling rule into explicit
design quantities: decision feasibility defines when a predictive packet
$\mathcal{P}$ is useful, the two-state surrogate gives the sojourn
statistics, and the AoI threshold controls outage resilience. We now
choose $p_t$, $\theta_0$, and $\theta_1$ to minimize long-term
transmission energy $np_t\,r$ while meeting the lead-time reliability constraint.

\begin{table}[t]
\centering
\caption{Validation of the Two-state Markov surrogate}
\vspace{-5pt}
\label{tab:markov_validation}
\scriptsize
\resizebox{0.48\textwidth}{!}{
\begin{tabular}{lcc|lcc}
\hline
\textbf{Metric} & \textbf{Surrogate} & \textbf{Empirical} & \textbf{Metric} & \textbf{Surrogate} & \textbf{Empirical} \\
\hline
$q_{01}$ & $0.06228$ & $0.062316$ & $q_{10}$ & $0.153705$ & $0.153951$ \\
$\mathbb{E}[T_0]$ & $16.0563$ & $16.0468$ & $\mathbb{E}[T_1]$ & $6.5060$ & $6.4956$ \\
\hline
\end{tabular}
}
\end{table}

\begin{problem}\label{prob:prob-1}
\begin{align}
\min_{p_t,\theta_0,\theta_1}\quad
& n p_t\,r(p_{u,0}^\star,p_{u,1}^\star)
\label{eq:opt_obj}\\
\mathrm{s.t.}\quad
& p_{\rm blk}+(1-p_{\rm blk})\bar\epsilon(p_t)\le\epsilon_\ell,
\label{ct:decision}\\
& p_{u,s}^\star\le1,\qquad s\in\{0,1\},
\label{ct:pu}\\
& \theta_s\in\mathbb{Z}_+,\qquad 0<p_t\le\rho.
\label{ct:domain}
\end{align}
\end{problem}

Constraint \eqref{ct:decision} bounds the delivery risk of a $\mathcal{P}$ packet, which may be blocked by an outage (probability $p_{\rm blk}$) or lost over the wireless link (probability $\bar\epsilon(p_t)$), and reduces to $\bar\epsilon(p_t)\le(\epsilon_\ell-p_{\rm blk})/(1-p_{\rm blk})$. Constraint \eqref{ct:pu} keeps the refresh probability $p_{u,s}^\star$ valid, equivalent to $\bar\epsilon(p_t)\le\epsilon_r^{1/\theta_s}$ for each $s$.

We solve the problem by a grid search over $(\theta_0,\theta_1)$ coupled with a one-dimensional search over $p_t$. Each grid point yields an admissible packet-error level $\bar\epsilon_{\max}=\min\{(\epsilon_\ell-p_{\rm blk})/(1-p_{\rm blk}),\epsilon_r^{1/\theta_0},\epsilon_r^{1/\theta_1}\}$, and is discarded if $p_{\rm blk}\ge\epsilon_\ell$ or the power budget cannot meet $\bar\epsilon(p_t)\le\bar\epsilon_{\max}$. Otherwise, since $\bar\epsilon(p_t)$ decreases monotonically in $p_t$, the lowest-energy feasible choice is the smallest $p_t$ satisfying $\bar\epsilon(p_t)=\bar\epsilon_{\max}$. The procedure exposes the energy trade-off: smaller $\theta_s$ detects outages faster and reduces $p_{\rm blk}$ but inflates the refresh rate, while larger $p_t$ lowers both packet errors and the required refresh probability at the cost of more expensive transmissions.

Algorithm~\ref{alg:sensor-full} combines the two update types. The pending flag prevents repeated predictive packets for the same anticipated crossing. Resilience updates are used only when no reliable transition is currently predicted; hence, their role is not to improve estimation uniformly, but to keep the remote agent fresh enough for outage detection and recovery.

\begin{algorithm}[t]
\caption{Sensor Transmission Strategy}
\label{alg:sensor-full}
\begin{algorithmic}[1]\scriptsize
\Require $p_{u,0}^\star,p_{u,1}^\star,\ell,z^-,z^+,\phi$
\State Initialize pending flag $F\gets0$ and previous local decision $\pi^s_{-1}$
\For{each slot $k=0,1,\ldots$}
    \State Run the local Kalman update and compute $\pi^s_k$ using $z^\pm$ and $\phi$
    \If{$\pi^s_k\ne\pi^s_{k-1}$} \State $F\gets0$ \Comment{new sojourn} \EndIf
    \If{$F=1$} \State $\pi^s_{k-1}\gets\pi^s_k$; \textbf{continue} \EndIf
    \State $\hat{\mathcal{T}}\gets$ Algorithm~\ref{alg:predictive}$(\hat{\mathbf{x}}^s_{k|k},\mathbf{P}^s_{k|k},\pi^s_{k-1},\ell)$
    \If{$\hat{\mathcal{T}}\ne\varnothing$}
        \State Transmit $\mathcal{P}$; \textbf{if} $\hat{\mathcal{T}}>k$ \textbf{then} $F\gets1$
    \Else
        \State Transmit $\mathcal{R}$ with probability $p_{u,\pi^s_k}^\star$
    \EndIf
    \State $\pi^s_{k-1}\gets\pi^s_k$
\EndFor
\end{algorithmic}
\end{algorithm}

\section{Simulation and Results}
\label{sec:sim}
\subsection{Validation of the Markov Surrogate}
\label{subsec:markov_validation}

Before evaluating the communication policies, we validate the two-state surrogate used in Approximation \ref{thm:switching}. From the resulting binary sequence, we estimate the empirical switching probabilities $\hat q_{01}$ and $\hat q_{10}$, and the empirical sojourn lengths $\hat T_0$ and $\hat T_1$. Fig.~\ref{fig:markov_validation} compares the empirical sojourn survival functions with the geometric tails $\Pr(T_s\ge t)=q_{ss}^{t-1}$ implied by the surrogate. The thresholded Gaussian process is not exactly a first-order Markov chain, so some tail deviations are expected because the continuous state retains memory beyond the binary label. Nevertheless, Table~\ref{tab:markov_validation} shows that the surrogate accurately captures the switching probabilities and mean state durations used in the long-term rate and blackout calculations. Thus, the surrogate is not used to claim exact binary Markovianity, but to obtain closed-form renewal quantities whose dominant terms are matched to the true process.

Thus, Problem~\ref{prob:prob-1} is surrogate-optimal for the approximation-driven model, and residual metric gaps reflect the Gaussian-posterior, Markov-sojourn, and blackout-fraction surrogates.

\subsection{Performance Evaluation}



\begin{table*}[t]
\centering
\caption{Quality Metrics Comparison at Matched Energy}
\vspace{-5pt}
\label{tab:benchmark}
\scriptsize
\renewcommand{\arraystretch}{0.90}
\resizebox{0.6\textwidth}{!}{
\begin{tabular}{llcccccc}
\hline
& \textbf{Metric} & \textbf{Proposed} & \textbf{Ideal} & \textbf{Predictive Only} & \textbf{Event-based} & \textbf{AoI-based} & \textbf{AoII-based} \\
\hline
\multirow{4}{*}{\textbf{Par.~A}}
& FPR & $0.0181$ & $0.0124$ & $0.0475$ & $0.0410$ & $0.0171$ & $0.0259$ \\
& FNR & $0.0613$ & $0.0077$ & $0.1316$ & $0.1305$ & $0.0324$ & $0.0895$ \\
& $\widehat{\Prb}(L\geq0)$ & $0.9103$ & $1.0000$ & $0.8632$ & $0.8690$ & $0.8874$ & $0.9115$ \\
& $\widehat{\Prb}(L>0)$ & $0.7793$ & $1.0000$ & $0.7517$ & $0.0000$ & $0.0000$ & $0.0000$ \\
\hline
\multirow{6}{*}{\textbf{Par.~B}}
& FPR & $0.0170$ & $0.0124$ & $0.0151$ & $0.0145$ & $0.0122$ & $0.0143$ \\
& FNR & $0.0574$ & $0.0077$ & $0.1665$ & $0.1957$ & $0.0175$ & $0.1537$ \\
& $\widehat{\Prb}(L\geq0)$ & $0.9138$ & $1.0000$ & $0.8632$ & $0.8690$ & $0.9552$ & $0.8989$ \\
& $\widehat{\Prb}(L>0)$ & $0.8011$ & $1.0000$ & $0.7598$ & $0.0161$ & $0.7989$ & $0.0425$ \\
\hline
\end{tabular}
} 
\vspace{-0.1cm}
\end{table*}

\begin{figure}[t]
    \centering
    \includegraphics[width=0.7\linewidth]{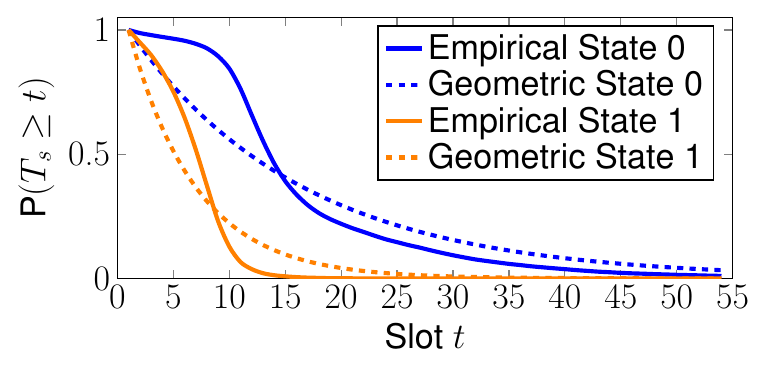}
    \vspace{-8pt}
    \caption{Validation of the Two-State Markov surrogate}
    \vspace{-2pt}
    \label{fig:markov_validation}
\end{figure}

\begin{table}[t]
\centering
\caption{Simulation and optimized parameters}
\vspace{-5pt}
\scriptsize
\setlength{\tabcolsep}{3pt}
\renewcommand{\arraystretch}{0.95}
\resizebox{0.95\columnwidth}{!}{%
\begin{tabular}{@{}lp{7.5cm}@{}}
\hline
\multicolumn{2}{c}{\textbf{Simulation parameters}}\\
\hline
System matrices & $\mathbf{A}=[0\;1;\,-0.9\;1.8]$, $\mathbf{C}=[0.5\;1.0]$, $\mathbf{Q}=\mathrm{diag}(0,1)$, $\mathbf{R}=0.1$ \\
Threshold & $\Delta=4.0$, $\epsilon_\ell=0.1$, $\ell=10$, $\alpha_{\mathrm{FP}}=\alpha_{\mathrm{FN}}=0.05$ \\
Channel & $n=128$, $l=256$, $\sigma^2=0.4$, $p_t\in[0.05,200]$ mW \\
Outage & $p_r=0.05$, $\lambda=3.0$, $\kappa=2.0$ \\
Simulation time & $5\times 10^4$ \\
\hline
\multicolumn{2}{c}{\textbf{Inferred optimum}}\\
\hline
Markov surrogate & $q_{01}=0.062$, $q_{10}=0.154$ \\
Aux.\ threshold/budgets & $\phi=3.859$, $\boldsymbol\theta=[13,3]$ \\
Refresh probabilities & $p_{u,0}=0.3184$, $p_{u,1}=0.8375$ \\
Prediction horizons & $\E[I_0]=1.08$, $\E[I_1]=1.30$ \\
Optimal power / avg.\ PER & $p_t^\star=40.37$, $\bar\epsilon=0.0632$ \\
\hline
\end{tabular}}
\label{tab:key_para}
\end{table}

\begin{figure}[t]
    \vspace{-3pt}
    \hspace{-15pt}
    \centering
    \includegraphics[width=1.05\linewidth]{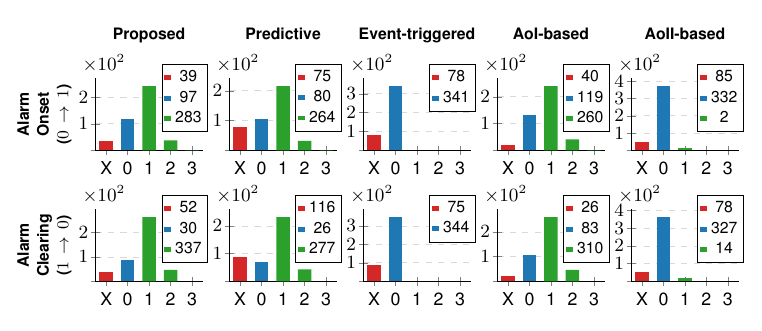}
    \vspace{-0.8cm}
    \caption{Comparison in update lead time ($L$) among policies.}
    \label{fig:histogram}
    \vspace{-0.01cm}
\end{figure}

We compare the proposed policy with four baselines under matched average transmit energy, $\mathcal{E}=np_t^\star r(p_{u,0}^\star,p_{u,1}^\star)$. The main performance metric is not only whether the decision is on time, measured by $\widehat{\Prb}(L\ge0)$, but whether it arrives with positive reaction time, measured by $\widehat{\Prb}(L>0)$. Table~\ref{tab:key_para} lists the simulation parameters and the inferred optimum of Problem~\ref{prob:prob-1}. The baselines are: \emph{Ideal}, a noise- and outage-free lower bound; \emph{Event-triggered}, transmission only on detected state changes; \emph{Predictive-only}, Algorithm~\ref{alg:predictive} without $\mathcal{R}$; \emph{AoI-based}, state-agnostic Bernoulli scheduling minimizing average AoI; and \emph{AoII-based}, transmission during the first $W_s$ slots of each state.

Table~\ref{tab:benchmark} reports the two remote-agent paradigms of Sec.~\ref{sec:agent}. Under \emph{decision adoption} (Par.~A), the agent simply copies the last received label, so the three baselines that never transmit an anticipated label: Event-triggered, AoI-based, and AoII-based, yield $\widehat{\Prb}(L>0)=0$ by construction. On decision accuracy, the proposed scheme trails AoI-based only marginally on FPR/FNR (which benefits from continually fresh estimates) and clearly outperforms the remaining baselines. On lead-time, it attains the highest non-ideal $\widehat{\Prb}(L>0)$ and is essentially tied with AoII-based on $\widehat{\Prb}(L\geq0)$, making it the only non-ideal scheme that simultaneously delivers a high probability of timely arrival and substantial early-decision mass.

Under \emph{filter-based decision} (Par.~B), every scheme can run the same posterior predictor at the estimator, so the comparison isolates how scheduling shapes the posterior at decision instants. AoI-based now leads on FPR/FNR and $\widehat{\Prb}(L\geq0)$: uniform freshness keeps the estimator both accurate everywhere and quick to detect disruptions. The proposed policy trails it on these accuracy metrics but attains the highest $\widehat{\Prb}(L>0)$ across all schemes, because concentrating transmissions around certified transitions converts what would otherwise be on-time updates into early ones. Fig.~\ref{fig:histogram} makes this trade-off visible: for both alarm onset ($0\to1$) and clearing ($1\to0$), the proposed scheme accepts slightly more missed transitions than AoI-based but redistributes the remaining mass toward $L>0$, where decisions are early enough to be actionable. Predictive-only also produces early updates but misses more often because long sojourns with a blocked $\mathcal{P}$ packet go undetected, while Event-triggered and AoII-based concentrate mass at $L=0$ since neither reserves transmissions for an anticipated crossing.

\begin{figure}[tb]
\vspace{-6pt}
\centering
\subfloat[$\Pr(L \geq 0)$, $\varepsilon_\ell = 0.1$]{
    \hspace{-15pt}
    \includegraphics[width=0.47\linewidth]{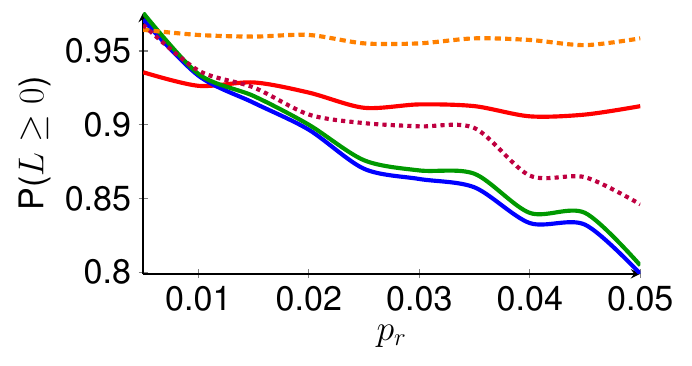}
    \label{fig:insight-a}
    \vspace{-10pt}
}
\subfloat[$\Pr(L > 0)$, $\varepsilon_\ell = 0.1$]{
    \hspace{-15pt}
    \includegraphics[width=0.47\linewidth]{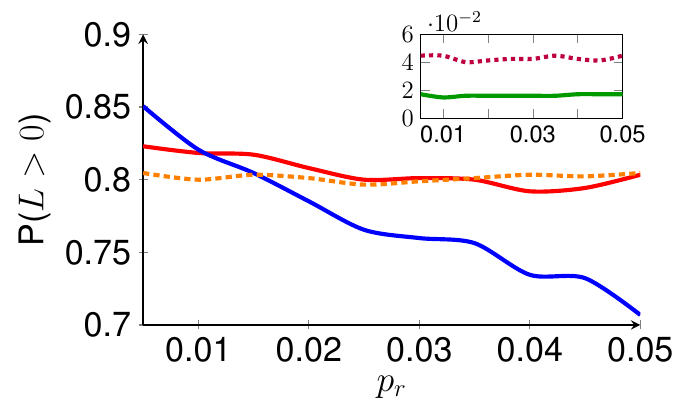}
    \label{fig:insight-b}
    \vspace{-10pt}
}\\[-10pt]
\subfloat[$\Pr(L \geq 0)$, $p_r = 0.01$]{
    \hspace{-15pt}
    \includegraphics[width=0.47\linewidth]{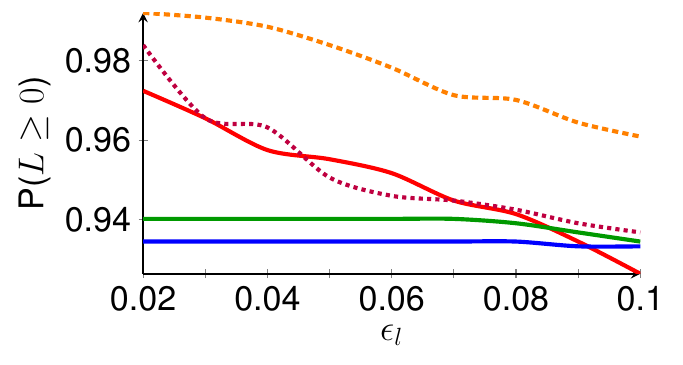}
    \label{fig:insight-c}
    \vspace{-15pt}
}
\subfloat[$\Pr(L > 0)$, $p_r = 0.01$]{
    \hspace{-15pt}
    \includegraphics[width=0.47\linewidth]{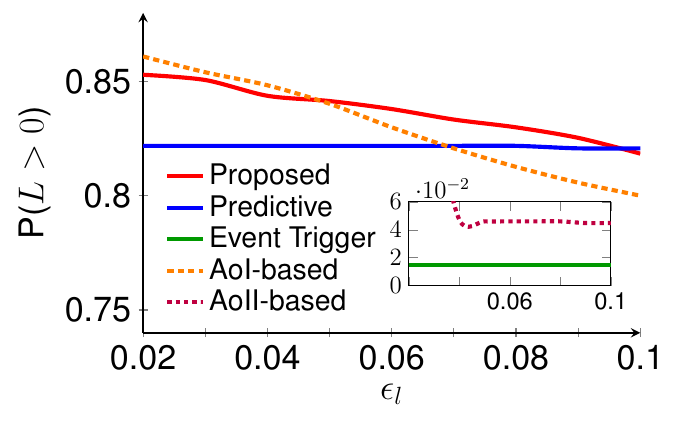}
    \label{fig:insight-d}
    \vspace{-12pt}
}
\caption{Lead-time success probability under varying $p_r$ and $\varepsilon_\ell$.}
\label{fig:insights}
\end{figure}

Fig.~\ref{fig:insights} sweeps the outage probability $p_r$ at fixed $\epsilon_\ell=0.1$ (panels a,b) and the reliability budget $\epsilon_\ell$ at fixed $p_r=0.01$ (panels c,d), and shows that no single baseline wins everywhere. AoI-based dominates $\widehat{\Prb}(L\geq0)$ throughout the swept range thanks to its uniform freshness. At low disruption ($p_r\lesssim 0.01$), where outage protection brings little benefit, Predictive-only and Event-triggered also reach the top on $\widehat{\Prb}(L\geq0)$, and Predictive-only additionally leads $\widehat{\Prb}(L>0)$. At the two extremes: strict $0.01\lesssim \epsilon_\ell\lesssim 0.04$ or high $p_r\gtrsim 0.05$, AoI-based takes over $\widehat{\Prb}(L>0)$ as well, since the energy budget then forces it close to full-slot transmission. The proposed policy is the only one that remains competitive across all regimes, and is the strongest choice for $\widehat{\Prb}(L>0)$ in the practically common middle band of moderate disruption ($p_r\in[0.01,0.04]$) combined with a relaxed budget ($\epsilon_\ell\in[0.04,0.1]$).

\section{Conclusion}

This letter proposed a decision-aware transmission strategy for remote threshold decisions over unreliable short-packet links. Posterior FPR/FNR constraints first define when a binary action is decision-feasible. This feasibility then drives predictive transition packets, while AoI-controlled resilience packets protect them from outage-induced staleness. Simulations show that the main benefit is not uniformly better estimation, but a higher probability of actionable lead time, $\widehat{\Prb}(L>0)$, at matched energy. Thus, threshold-monitoring communication should be designed around reliable decisions before the crossing, not only around accuracy or freshness.

\vspace{-6pt}
\bibliographystyle{IEEEtran}
\bibliography{references}

\end{document}